\documentclass[aps,pra,superscriptaddress,nofootinbib,showpacs,floatfix,twocolumn,notitlepage,pdftex]{revtex4-2}

\usepackage{graphicx}
\usepackage{dcolumn}
\usepackage{bm}
\usepackage[pdftex, colorlinks=true]{hyperref}

\usepackage[utf8]{inputenc}
\usepackage[english]{babel}
\usepackage[T1]{fontenc}
\usepackage{svg}
\usepackage{amsmath}
\usepackage{amssymb}
\usepackage{color}
\newcommand{\Lind}{\mathbf{\mathcal{L}}}
\newcommand{\E}{\mathbf{\mathcal{E}}}

\usepackage{amsthm}

\newtheorem{prop}{Proposition}
\usepackage{dsfont}

\usepackage{algpseudocode}
\usepackage{mdframed}
\newcounter{algorithmctr}

\begin{document}

\preprint{APS/123-QED}

\title{Trajectory-independent speed limits for controlled open quantum systems}

\author{James B. Larsen}
\email{jblarsen@umich.edu}
\affiliation{Department of Mathematics, University of Michigan, Ann Arbor, Michigan 48109, USA}
\affiliation{Quantum Algorithms and Applications Collaboratory, Sandia National Laboratories, Albuquerque, New Mexico 87185, USA}
\author{Tameem Albash}
\affiliation{Quantum Algorithms and Applications Collaboratory, Sandia National Laboratories, Albuquerque, New Mexico 87185, USA}
\author{Alicia B. Magann}
\affiliation{Quantum Algorithms and Applications Collaboratory, Sandia National Laboratories, Albuquerque, New Mexico 87185, USA}
\author{Christian Arenz}
\affiliation{School of Electrical, Computer, and Energy Engineering, Arizona State University, Tempe, Arizona 85287, USA}

\date{\today}

\begin{abstract}
Existing quantum speed limits for controlled open quantum systems depend on the specified trajectory. For example, lower bounds on quantum annealing times in the presence of dissipation depend explicitly on the chosen annealing schedule. Recently, schedule-independent speed limits have been derived for annealing in the closed quantum system setting (SciPost Phys. 18, 159 (2025)). In this work, we generalize these results to open quantum systems, deriving schedule-independent lower bounds for quantum annealing times in systems described by a Lindblad master equation. We analyze the interplay between coherent control and dissipation in single- and two-qubit examples, demonstrating that the derived lower bounds capture key scaling behavior with respect to the strength of the dissipator. Finally, we apply the bound to thermal state preparation and show that the bound matches the expected asymptotic behavior for an Ising model in the high temperature limit.
\end{abstract}

\maketitle

\section{\label{sec:intro} Introduction}
Time-optimal quantum evolutions have been found by the quantum control community for decades since the succinct quantum brachistochrone problem formulation by Carlini et al. \cite{Carlini2006}, with the generalization to the open quantum system case happening soon after \cite{Carlini2008}.
Related questions about time optimality independently gave rise to the study of quantum speed limits  \cite{Frey2016, Deffner_2017}, seeking to bound the minimum time required to evolve from an initial state to a desired target state. The paradigmatic quantum speed limits are given by the Mandelstam-Tamm \cite{Mandelstam1991} and Margolus-Levitin inequalities \cite{Margolus1998} that lower bound the evolution time $T$ by the inverse of the expectation value and the variance, respectively, of the Hamiltonian governing the evolution. While these bounds and similar results for open quantum systems from the literature \cite{Taddei2013, delCampo2013, Deffner2013, Sun2015, Pires2016PRX, Funo2019, Campaioli2019tightrobust, Mai2023, Rosal2025, Srivastav_2025} can tightly characterize the required evolution time when the dynamics of the system are known \emph{a priori}, practical difficulties arise in the presence of time-dependent control.

One example where such difficulties arise in the presence of time-dependent control is in quantum annealing~\cite{Apolloni1989,Finnila1994,Kadowaki1998,Brooke1999,Farhi2001,Santoro2002,Lucas2014}, a powerful framework for solving optimization problems. 
In this framework, one drives a system from an easy-to-prepare initial state to the solution of the optimization problem encoded in some target state via an annealing schedule. The adiabatic theorem \cite{Born1928,Kato1950,Jansen2007} guarantees that if the initial state is the ground state of some Hamiltonian $H_1$ and the target state is the ground state of a Hamiltonian $H_0$, a sufficiently slow continuous interpolation from $H_1$ to $H_0$ will prepare the desired ground state of $H_{0}$. Quantum annealing through adiabatic schedules is in fact equivalent to universal quantum computing \cite{Aharonov2007, Albash2018}, thus showcasing the power and flexibility of this framework.

When viewed as an algorithm for state preparation, the time complexity of quantum annealing is dictated by the required duration of the schedule. Quantum optimal control can be leveraged to find shorter annealing schedules that violate the adiabatic theorem  \cite{Werschnik2007, GuryOdelin2019}. Evaluating the optimality of a chosen schedule would require a schedule-independent lower bound on the annealing time required, a problem that was addressed recently for the closed quantum system setting in Ref.~\cite{Garc_a_Pintos_2025}. 

Open quantum system dynamics, e.g., dissipation, can often be leveraged to achieve faster state preparation times \cite{Dickson2013, Lin2025}, but the corresponding speed limits for open quantum systems depend on the trajectory taken in Hilbert space \cite{Taddei2013, delCampo2013, Deffner2013, Sun2015, Funo2019, Pires2016PRX, Campaioli2019tightrobust, Mai2023, Srivastav_2025, Rosal2025}. Here we directly extend and generalize the results of Ref.~\cite{Garc_a_Pintos_2025} to quantum annealing in open quantum systems, deriving schedule-independent speed limits in systems described by a Lindblad master equation~\cite{GKS,Lindblad1976} and thus providing necessary conditions to complement the sufficient conditions of the adiabatic theorem. This restriction to the annealing setting presents an avenue to derive trajectory-independent speed limits for non-unitary evolutions.

This work is organized as follows. We start in Sec.~\ref{sec:speedlimitsopensys} by deriving speed limits for controlled open quantum systems. These general considerations are followed by considering the annealing framework in an open quantum system setting, presenting the main result in Eq.~\eqref{eq:specineq}: a schedule-independent lower bound for the quantum annealing time $T$. We go on in Sec. \ref{sec:casestudies} to investigate in numerical simulations the tightness of the derived lower bound for one- and two-qubit systems, focusing on the interplay between coherent control and non-unitary dynamics. Afterwards, we analyze the temperature dependence of the bound in the case of thermal state preparation of an Ising Hamiltonian. Finally, we conclude in Sec.~\ref{sec:conclusion} with the main takeaways and potential future directions of inquiry.

\section{Speed limits for controlled open quantum systems}\label{sec:speedlimitsopensys}

\subsection{Controlled open quantum system dynamics}
We consider a controlled open quantum system \cite{Koch2016} described by the Lindblad master equation \cite{GKS, Lindblad1976, Breuer2007}, 
\begin{equation}
\label{eq:lindblad}
    \frac{d\rho}{dt} = \Lind_{t} \rho,\quad \Lind_{t}\rho = -i[H(t),\rho] + \mathcal{D}(\rho),
\end{equation}
where $H(t)$ is the time-dependent Hamiltonian that describes coherent control and  $\mathcal D$ is the dissipator. This dissipator is assumed to be in Lindblad form and describes non-unitary processes, including dissipation and decoherence.  The solution  of the master equation up to time $t$ is given by a completely-positive and trace-preserving (CPTP) map $\E_t$.  

Looking ahead to the goal of deriving a schedule-independent speed limit, we first derive a bound relative to a secondary system that will eventually help us cancel out the time-dependent dynamics from Eq.~\eqref{eq:lindblad}. As such, we consider another closed quantum system whose unitary dynamics are governed by $\frac{d\rho}{dt} = \tilde{\mathcal L}_{t}\rho=-i[\tilde{H}(t),\rho]$, where $\tilde{H}(t)$ is the Hamiltonian describing the secondary system and $\mathcal{\tilde U}_t$ is the corresponding solution given by a unitary quantum channel. If we assume that the initial state $\rho_{0}$ is a fixed point of $\tilde{\mathcal U}_t$, i.e. $\tilde{\mathcal U}_{t}\rho_{0}=\rho_{0}$, then
\begin{align}
    ||\rho_T - \rho_0||_1&= ||\E_T\rho_0 - \tilde{\mathcal U}_T\rho_0||_1 \nonumber \\
    &= \left|\left| \left( \int_0^T \tilde{\mathcal U}_{-t}(\tilde{\Lind}_{t}-\Lind_{t})\E_{t}\,dt \right) \rho_0 \right|\right|_1 \label{eq:duhamel} \\
    &\leq \int_0^T ||(\tilde{\Lind}_{t}-\Lind_{t})\E_{t} \rho_0 ||_1\,dt. \label{eq:uninvtriineq}
\end{align}
Eq.~\eqref{eq:duhamel} follows from a direct application of the product rule,
\begin{align}
  \frac{d}{dt}(\tilde{\mathcal U}_{-t}\E_t - \mathds{1}) &= \left(\frac{d}{dt}\tilde{\mathcal U}_{-t}\right)\E_t + \tilde{\mathcal U}_{-t} \left(\frac{d}{dt}\E_t\right) \nonumber \\
    &= \tilde{\mathcal U}_{-t} (\tilde\Lind_{t} - \Lind_{t}) \E_t, \label{integraltrick}
\end{align}
where for unitary dynamics $\mathcal U_{t}^{-1}=\mathcal U_{-t}$. This trick is similar to Duhamel's principle, see e.g. Refs.~\cite{burgarth2022one,Wiedmann2026}. Eq.~\eqref{eq:uninvtriineq} follows from unitary invariance of the trace norm $\Vert \cdot \Vert_{1}$ and the triangle inequality. Since
\begin{align}
||(\tilde{\Lind}_{t}-\Lind_{t})\E_{t} \rho_0 ||_1\leq  ||\tilde{\Lind}_{t} - \Lind_{t}||_{1\rightarrow1},
\end{align}
where $\Vert \cdot \Vert_{1\rightarrow 1}$ denotes the induced 1-norm (see App.~\ref{app:norms}), we arrive at 
\begin{align}
 ||\rho_T - \rho_0||_{1} \leq \int_0^T ||\tilde{\Lind}_{t} - \Lind_{t}||_{1\rightarrow1}\,dt \label{eq:diffgenboundint},
\end{align}
which yields the lower bound
\begin{equation}
\label{eq:genineq}
    T \geq \frac{||\rho_T - \rho_0||_1}{\max\limits_{t \in [0,T]} ||\Lind_{t} - \tilde{\Lind}_{t}||_{1\rightarrow1}}.
\end{equation}
We note that this inequality provides a whole family of lower bounds for the time $T$ to prepare a desired target state in an open quantum system described by the master equation in Eq.~\eqref{eq:lindblad}. For each final state $\rho_{T}$, secondary closed system $\tilde{\mathcal L}_{t}$, and initial fixed point $\rho_{0}$ of $\tilde{\mathcal L}_{t}$, Eq.~\eqref{eq:genineq} provides us with a new lower bound. In fact, choosing $\tilde{\mathcal L}_{t}=0$ leads to a speed limit, 
\begin{equation}\label{eq:mtequiv}
    T \geq \frac{||\rho_T - \rho_0||_1}{\frac{1}{T}\int_0^T ||\Lind_{t}||_{1\to 1}\,dt},
\end{equation}
that lower bounds the time $T$ and resembles known speed limits for open quantum systems that depend on the trajectory. For comparison, lower bounds in Refs.~\cite{Deffner2013, Taddei2013, delCampo2013, Pires2016PRX, Campaioli2019tightrobust, Rosal2025} all take the form
\begin{equation}
\label{eq:deffnerboundform}
    T \geq \frac{d\left(\rho_T, \rho_0\right)}{\frac{1}{T}\int_0^T||\Lind_t\rho_t||\,dt},
\end{equation}
where $d$ is some metric quantifying the distance between the initial and final states and the choice of norm in the denominator may vary. For example, Ref.~\cite{Deffner2013} uses the Bures angle $d(\rho_T, \rho_0) = \arccos\left( \sqrt{\langle\psi_0|\rho_T|\psi_0 \rangle}\right)$ between the final state and a pure initial state $\rho_0 = |\psi_0 \rangle \langle \psi_0 |$ for the metric $d$. The denominator term is sometimes referred to as the strength of the generator \cite{Campaioli2019tightrobust} and for unitary processes it equals the time-averaged variance when the Schatten-2 norm is used \cite{Deffner2013}. Furthermore, App.~B of Ref.~\cite{Campaioli2019tightrobust} shows that even speed limits without an explicit inverse dependence on the strength of the generator such as those in Ref.~\cite{Sun2015} can still be upper bounded by this strength. Note that the bound in Eq.~\eqref{eq:mtequiv} maximizes over the strength, loosening the bound and moving toward trajectory independence.

While the generality and loose assumptions of Eq.~\eqref{eq:genineq} are desirable as a mathematical tool, its practical applicability is limited since the denominator still depends on the chosen trajectory of the controlled evolution. This shortcoming serves as the primary motivation for the theoretical advances in the remainder of this work.

\subsection{Schedule-independent lower bounds on the annealing time}
\label{subsec:anneal}
To address the practical limitations of Eq.~\eqref{eq:genineq}, we now turn to the specific instance of quantum annealing in an open quantum system. We consider a time-dependent Hamiltonian of the form 
\begin{equation}
H(t)=H_{0}+f(t)H_{1},
\end{equation}
where we refer to $H_{0}$ as the drift Hamiltonian, $f(t)$ as the control field that describes the annealing schedule and $H_{1}$ as the corresponding control Hamiltonian. We start in the ground state of $H_1$ and seek to prepare the ground state of $H_0$. An adiabatic evolution would prepare the target state with high probability when $f(0) \gg 1$ such that the ground state of $H_0 + f(0) H_1$ is approximately the ground state of $H_1$.

Inspired by the approach in Ref.~\cite{Garc_a_Pintos_2025}, we can remove the explicit time dependence of the denominator in Eq.~\eqref{eq:genineq} by picking $\tilde{\mathcal L}_{t}\rho=-if(t)[H_{1},\rho]$. This choice gives a lower bound, 
\begin{equation}
    \label{eq:specineq}
    T \geq \frac{||\rho_T - \rho_0||_1}{ ||\Lind||_{1\rightarrow1}},
\end{equation}
 on the time $T$ it takes for annealing to prepare the state $\rho_{T}$ where
\begin{align}
 \Lind \rho = -i[H_0, \rho] + \mathcal{D}(\rho).
 \end{align}
We thus have derived a speed limit for quantum annealing in an open quantum system described by the master equation in Eq.~\eqref{eq:lindblad} that does not depend on the annealing schedule, an example of a trajectory-independent speed limit for nonunitary dynamics.

We remark that the induced 1-norm may be challenging to compute for superoperators. However, since $ ||\Lind||_{1\rightarrow1}\leq \sqrt{d}||\Lind||_{2\rightarrow2}$ where $d$ is the dimension of the system (see App.~\ref{app:normequiv}), we find another speed limit,  
\begin{equation}
    \label{eq:specineqweak}
    T \geq \frac{||\rho_T - \rho_0||_1}{ \sqrt{d}||\Lind||_{2\rightarrow2}},
\end{equation}
expressed in terms of the induced 2-norm that can be computed more readily (see App.~\ref{app:norms}). In App.~\ref{app:qslcomp}, we compare Eq.~\eqref{eq:specineq} when $\mathcal{D}=0$ to the schedule-independent bound for annealing in a closed quantum system derived in Ref.~\cite{Garc_a_Pintos_2025}.

\section{\label{sec:apps} Case studies}\label{sec:casestudies}
In this section we explore the tightness of the developed lower bound from Eq.~\eqref{eq:specineqweak} in examples. We compare the bound to the minimum time found numerically. To find this minimum time, we optimize a time-dependent control field $f(t)$--which functions like an annealing schedule--across various total times $T$, using a binary search to identify the shortest $T$. In order to perform this optimization, we assume the control field is piecewise constant on $20$ time intervals of equal length $\Delta t=T/20$, allowing us to easily optimize the control function using \texttt{scipy.optimize}'s BFGS \cite{2020SciPy-NMeth} for each total time $T$. We use the trace distance between the target state and the time-evolved state as the cost function for the BFGS optimizer with a target trace distance of 0.1. We note that there is no guarantee that the BFGS algorithm converges to a global optimum. Hence, the numerical estimate provided for the shortest total time $T$ can be considered as an upper bound on the minimum time required for comparison to the lower bound provided by Eq.~\eqref{eq:specineqweak}.

\subsection{\label{subsec:ampdamp} A single qubit subject to amplitude damping}
For systems of size $N$ with $N\geq 2$, the denominator of the bound in Eq.~\eqref{eq:specineqweak} in general cannot be written in closed form. Bounds for such systems will have to be treated either asymptotically or numerically. However, for $N=1$, an analytic expression for the bound can be computed exactly, providing a meaningful starting point to better understand the derived speed limit. We thus first consider a single qubit described by the Hamiltonian 
\begin{align}\label{eq:singlequbitham}
H(t)=-\omega \sigma_{z}+f(t)\sigma_{x}
\end{align}
subject to amplitude damping described by the dissipator 
\begin{equation}
\label{eq:ampdamp}
    \mathcal{D}(\rho) = \gamma \left( 2 \sigma_- \rho \sigma_+ - \sigma_+ \sigma_- \rho - \rho \sigma_+ \sigma_- \right).
\end{equation}
Here, $\sigma_{x}$ and $\sigma_{z}$ are the Pauli spin operators, $\sigma_- = |0\rangle \langle 1|$ and $\sigma_+ = \sigma_-^\dagger$ are the lowering and raising operators, and $\omega>0$ and $\gamma>0$ are the frequency and dissipation rate of the qubit, respectively. We assume that the qubit is initialized in the ground state $\rho_{0}=|-\rangle\langle -|$ of the control Hamiltonian $H_{1}=\sigma_{x}$. The target state $\rho_{T}=|0\rangle\langle 0|$ is the unique fixed point of the dissipator, i.e., $\mathcal D(\rho_{T})=0$. 

Since $||\rho_T - \rho_0||_1 = 1$, the bound in Eq.~\eqref{eq:specineqweak} then gives the lower bound
\begin{align}
    \label{eq:twolevelineq}
    T \geq \frac{1}{ \sqrt{2}|\gamma+2i\omega| }
\end{align}
for the time $T$ to prepare the target state $\rho_{T}$.  We note that this bound captures one trivial notion of reachability. As $\gamma$ and $\omega$ go to zero, the time required to prepare $\rho_{T}$ goes to infinity. Without the presence of the drift  Hamiltonian $H_{0}=-\omega\sigma_{z}$ and the dissipator $\mathcal D$, the control Hamiltonian $H_{1}=\sigma_{x}$ alone cannot drive the system to the target state.

\begin{figure}[t]
    \centering
    \includegraphics[width=\linewidth]{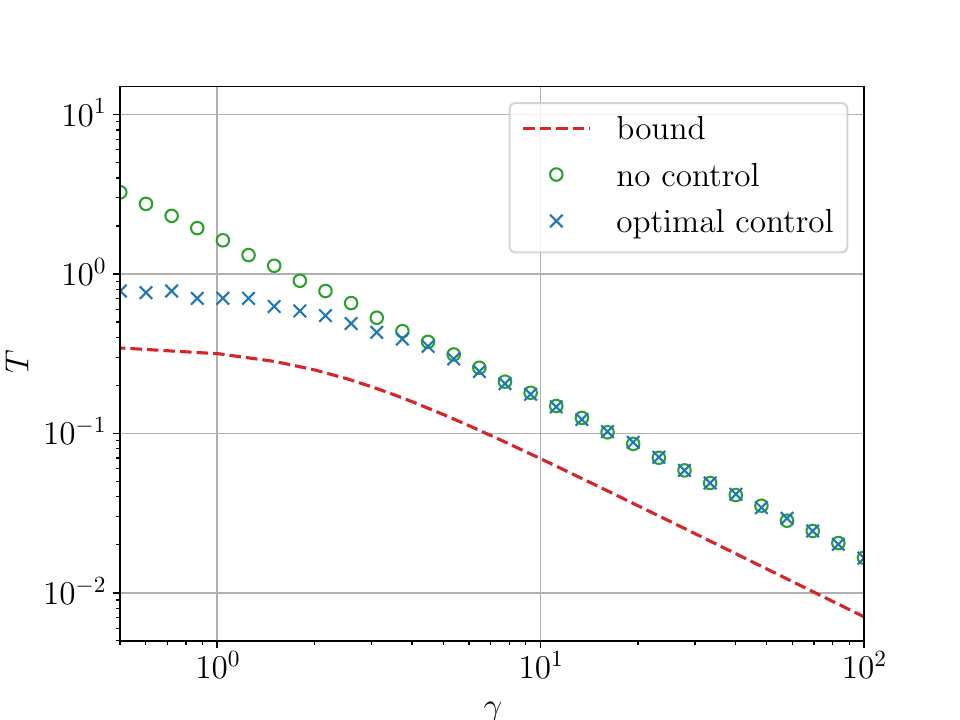}
    \caption{Time $T$ to prepare a desired target state for the single-qubit system described by Eqs. \eqref{eq:singlequbitham} and \eqref{eq:ampdamp} as a function of the decay rate $\gamma$ for fixed $\omega=1$. The bound in Eq.~\eqref{eq:twolevelineq} is shown as a dashed red line.  The blue markers show the time required to prepare the target state up to a trace distance of $0.1$ with a piecewise-constant control $f(t)$. The optimal time is numerically estimated using the method described in the introduction of Sec.~\ref{sec:casestudies}. For reference, the green markers show the time required for the dissipator alone to drive the system to its fixed point up to the same trace distance as in the case assisted by coherent control. } 
    \label{fig:gammascaling_om1}
\end{figure}

The scaling of this bound with respect to $\gamma$ is illustrated in Fig.~\ref{fig:gammascaling_om1} for $\omega = 1$, where the blue markers provide comparison to the numerically found minimum time required to prepare the target state. The plots indicate two separate regimes depending on the size of $\gamma$ relative to $\omega$. For $\gamma \gg \omega$, the strength of the dissipator dominates both the bound and the actual time required, hindering the acceleration possible with the control field $f(t)$. In the smaller $\gamma$ regime, the control is able to more dramatically accelerate the preparation of the target state.

\begin{figure}
    \centering
    \includegraphics[width=1\linewidth]{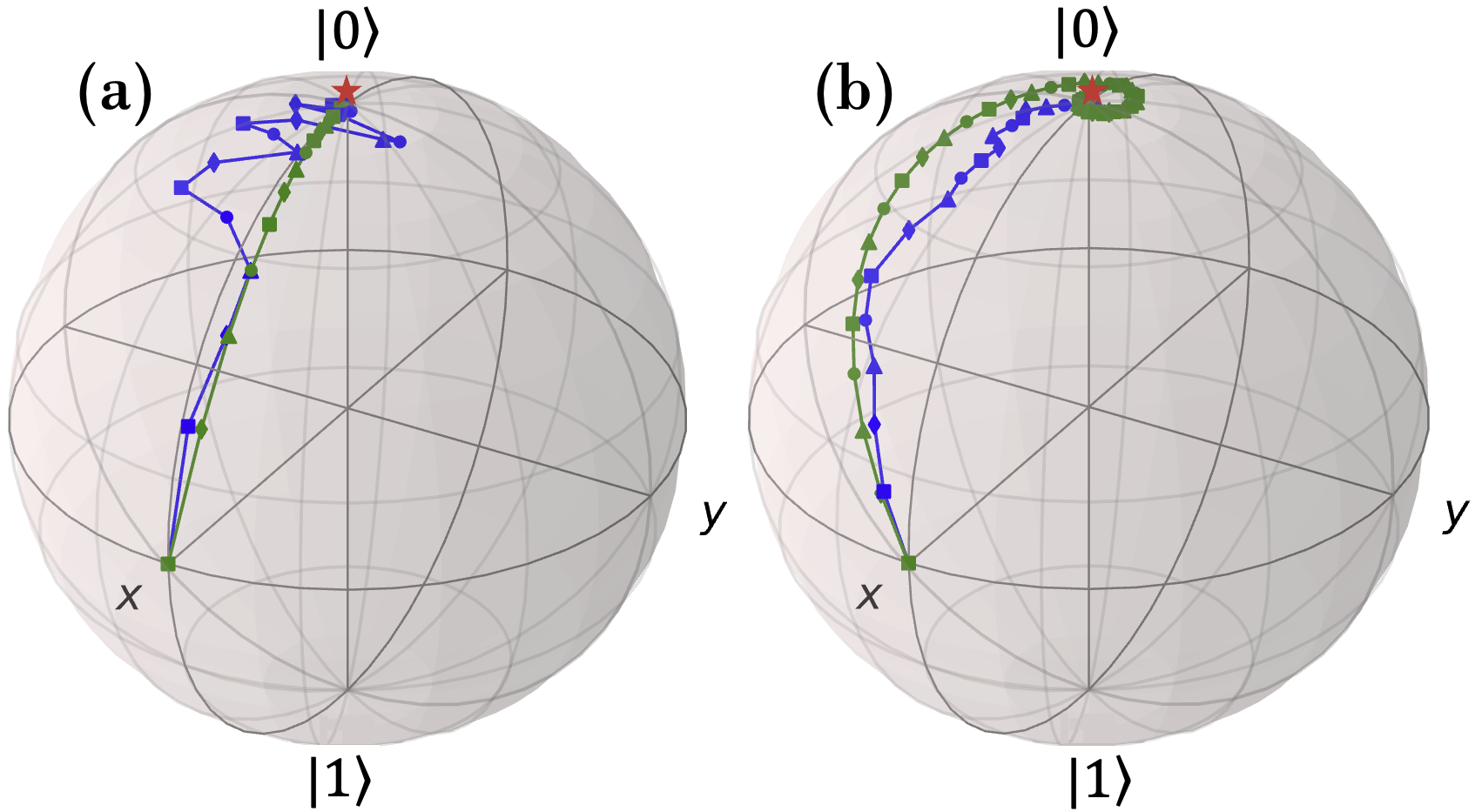}
    \caption{Controlled (blue) and uncontrolled (green) trajectories subject to amplitude damping with $\gamma = 1$ for the single-qubit system described by Eq.~\eqref{eq:singlequbitham} and \eqref{eq:ampdamp}. In (a), no drift Hamiltonian is present, i.e., $\omega=0$, while in (b) we set $\omega=1$. 
Even though the green trajectory in (b) spirals around the target state due to the presence of the drift Hamiltonian, the total preparation time is still equal to the preparation time that correspond to the trajectory in (a), highlighting that the presence of the drift Hamiltonian does not change the required preparation time. The blue trajectory in (b) illustrates that drift, control and amplitude damping are able to work together to accelerate the preparation of the target state 2.7-fold.   The green and blue markers are synchronized in time to help illustrate that the control is unable to accelerate the state preparation without the drift Hamiltonian. The red stars on the north poles illustrate the target states. These visualizations and the one-qubit dynamics are generated using QuTiP \cite{qutip5}.}
    \label{fig:bloch_traj}
\end{figure}

In Fig. \ref{fig:bloch_traj} we show an example of a controlled (blue) and uncontrolled (green) trajectory for (a) $\omega=0$ and (b) $\omega=1$ with fixed $\gamma =1$. The trajectories suggest that in the driftless case, i.e., when $\omega=0$, the control is unable to accelerate the time to prepare the target state, shown as a red marker.   
That is, the total time $T$ required is independent of the control field. Note that both green trajectories take the same amount of time even though the exact trajectory varies. This phenomenon generalizes to other values of $\omega$, highlighting the fact that the strength of the drift Hamiltonian has no effect on the overall time required for amplitude damping to prepare the target state in the absence of coherent control. We further note from Fig.~\ref{fig:bloch_traj}(b) that the control and drift combined are able to accelerate the target state preparation time, providing another illustration of the gap between the blue and green markers in Fig.~\ref{fig:gammascaling_om1}.

\subsection{\label{subsec:bellstate} Dissipative Bell state preparation} 
Next we consider an interacting two qubit system described by the Hamiltonian 
\begin{align}
\label{eq:hamtwoqubits}
H(t)=\omega(\sigma_1^{x}\sigma_{2}^{x}+\sigma_1^{z}\sigma_{2}^{z})+f(t)(\sigma_{1}^{x}+\sigma_{2}^{x}),
\end{align}
where $\sigma_{j}^{x}$ and $\sigma_{j}^{z}$ are Pauli operators acting on the $j$-th qubit. We assume that the system is initially prepared in the ground state $\rho_{0}=|-\rangle\langle -|\otimes |-\rangle\langle -| $ of the control Hamiltonian $H_{1}=\sigma_{1}^{x}+\sigma_{2}^{x}$. We aim to prepare a Bell state $\rho_{T}=|\beta_{3}\rangle \langle \beta_{3}|$ with the help of dissipation described by 
\begin{equation}
\label{eq:dissipatortwoqubits}
    \mathcal{D}(\rho) = \gamma\sum_{k=0}^2 \left( 2 \sigma_-^{(k)} \rho \sigma_+^{(k)} - \sigma_+^{(k)} \sigma_-^{(k)} \rho - \rho \sigma_+^{(k)} \sigma_-^{(k)} \right),
\end{equation}
where the jump operators $\sigma_-^{(k)} = |00\rangle\langle\beta_k|$ and $\sigma_{+}^{(k)}=\left(\sigma_{-}^{(k)}\right)^\dagger$  are formed by the four Bell states 
\begin{align*}
    |\beta_0 \rangle &= \frac{1}{\sqrt{2}} \left( |00\rangle + |11\rangle\right),\qquad |\beta_1 \rangle = \frac{1}{\sqrt{2}} \left( |01\rangle + |10\rangle\right), \\
    |\beta_2 \rangle &= \frac{1}{\sqrt{2}} \left( |00\rangle - |11\rangle\right), \qquad|\beta_3 \rangle = \frac{1}{\sqrt{2}} \left( |01\rangle - |10\rangle\right).
\end{align*}
We note that the target Bell state $|\beta_{3}\rangle$ is the unique fixed point of $\mathcal D$. The bound in Eq.~\eqref{eq:specineqweak} then provides a lower bound for the time $T$ to prepare $|\beta_{3}\rangle$.  

In Fig.~\ref{fig:gammascaling_bell} we analyze the tightness of the bound, which we compute numerically. The numerical results suggest that the bound captures the asymptotic scaling as $\gamma \to \infty$,  matching the intuitive prediction that stronger dissipation enables faster Bell state preparation. We note that the bound from Eq.~\eqref{eq:specineqweak} holds for any control Hamiltonians for which the initial state is an eigenstate. We explored this possibility by optimizing over three different independent control functions coupled to $\sigma_1^x$, $\sigma_2^x$ and $\sigma_1^x\sigma_2^x$. However, we were unable to find any speedup in the target Bell state preparation time with these extra degrees of control beyond what was already achieved with a single control field $f(t)$ coupled to the system via $H_{1}=\sigma_1^x+\sigma_2^x$.

\begin{figure}[t]
    \centering
    \includegraphics[width=\linewidth]{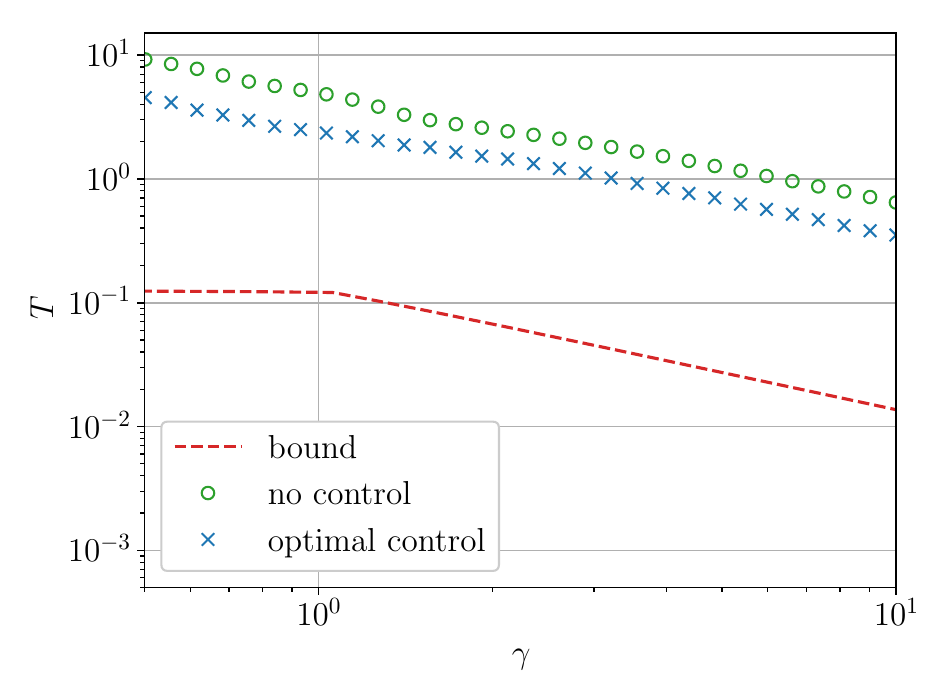}
    \caption{Time $T$ to prepare a Bell state for the two-qubit system described by Eqs.~\eqref{eq:hamtwoqubits} and \eqref{eq:dissipatortwoqubits} as a function of the dissipation rate $\gamma$ for fixed $\omega=1$. The bound in Eq.~\eqref{eq:specineqweak} is shown as a red dashed line. The blue markers show the time to prepare the target state up to trace distance 0.1 with a piecewise-constant control. The optimal time is numerically estimated from above. The green markers show the time required for the dissipator alone to drive the system to its fixed point up to the same trace distance as in the case assisted by coherent control.} 
    \label{fig:gammascaling_bell}
\end{figure}

\subsection{\label{subsec:thermal}Thermal state preparation}
Thermal state preparation is a fundamental subroutine for quantum simulation~\cite{Chen2025,Chen2025b}.
It has wide ranging applications in many-body physics, equilibrium statistical mechanics and materials science and can enable the study of phase transitions \cite{Alhambra2021LectureNotes}. For example, it can be leveraged to study the Mpemba effect \cite{Edo2025}. In this section, we will examine the behavior of the speed limit from Eq.~\eqref{eq:specineqweak} for this subroutine. 

We consider a system of $N$ spins described by the Ising Hamiltonian
\begin{equation}
\label{eq:hamising}
    H(t) = -\sum_{i=1}^N h_i \sigma_i^z + \sum_{i,j=1}^N J_{ij}\sigma_i^z\sigma_j^z - f(t)\sum_{i=1}^N \sigma_i^x,
\end{equation} 
where $h_i$ are the local fields on each spin, $J_{ij}$ are the coupling constants, and the control field $f(t)$ is given by the transverse magnetic field amplitude. We focus on the case of the extensive antiferromagnetic Ising model with all-to-all couplings (infinite range) with $h_i = 1$ and $J_{ij} = \frac{1}{N}$ for $i,j = 1,...,N$. The uniform antiferromagnetic coupling gives rise to frustration in the ground state, which is not resolved by the non-zero global field.

When the spins  with $f(t)=0$ are weakly coupled to a bath in thermal equilibrium at inverse-temperature $\beta$, the dynamics can be approximated under suitable assumptions by a master equation of the form of Eq.~\eqref{eq:lindblad} where the dissipator reads~\cite{Davies1974}

    \begin{equation}
    \label{eq:dissipatorising}
    \mathcal{D}(\rho) = \sum_{a,b,\omega}  \gamma_{ab}(\omega) \left( L_{\omega, b} \rho L_{\omega, a}^\dagger - \frac{1}{2}\left\{L_{\omega,a}^\dagger L_{\omega,b}, \rho\right\}\right).
\end{equation}
Here, the jump operators are given by  $L_{\omega, k} = \sum_{\varepsilon_b - \varepsilon_a = \omega} |\varepsilon_a\rangle \langle \varepsilon_a | \sigma_k^x |\varepsilon_b \rangle \langle \varepsilon_b|$ where $|\varepsilon_{a}\rangle$ are the eigenstates of the drift Hamiltonian $H_{0}=-\sum_{i=1}^N h_i \sigma_i^z + \sum_{i,j=1}^N J_{ij}\sigma_i^z\sigma_j^z$ while the decay rates $\gamma_{ab}$'s are related to the spectral density functions. Specifically, in order for the thermal state $\rho \propto e^{-\beta H_0}$ to be a fixed point of $\mathcal D$, $\gamma_{a b}(\omega)$ must satisfy the Kubo-Martin-Schwinger detailed balance condition~\cite{Kubo1957,Martin1959,Haag1967} $\gamma_{a b}(-\omega) = \gamma_{b a}(\omega) e^{-\beta \omega}$. We can satisfy this condition by choosing 
\begin{equation} \label{eqt:gamma}
    \gamma_{ab}(\omega) = \frac{2\pi \omega e^{-|\omega|/\omega_c}}{1-e^{-\beta \omega}} \eta g^2 \delta_{ab} ,
\end{equation}
where $\omega_c$ is a high-frequency cut-off, $\eta g^2$ is a constant corresponding to the coupling strength between the bath and the qubit.
This form arises from a specific model for the thermal bath and system-bath coupling: each qubit interacts with an independent harmonic oscillator bath, and each bath has an identical Ohmic spectral density with a high frequency cut-off.
 A derivation of Eq.~\eqref{eqt:gamma} for this bath model can be found in Ref.~\cite{Albash_2012}.

\begin{figure}
    \centering
    \includegraphics[width=\linewidth]{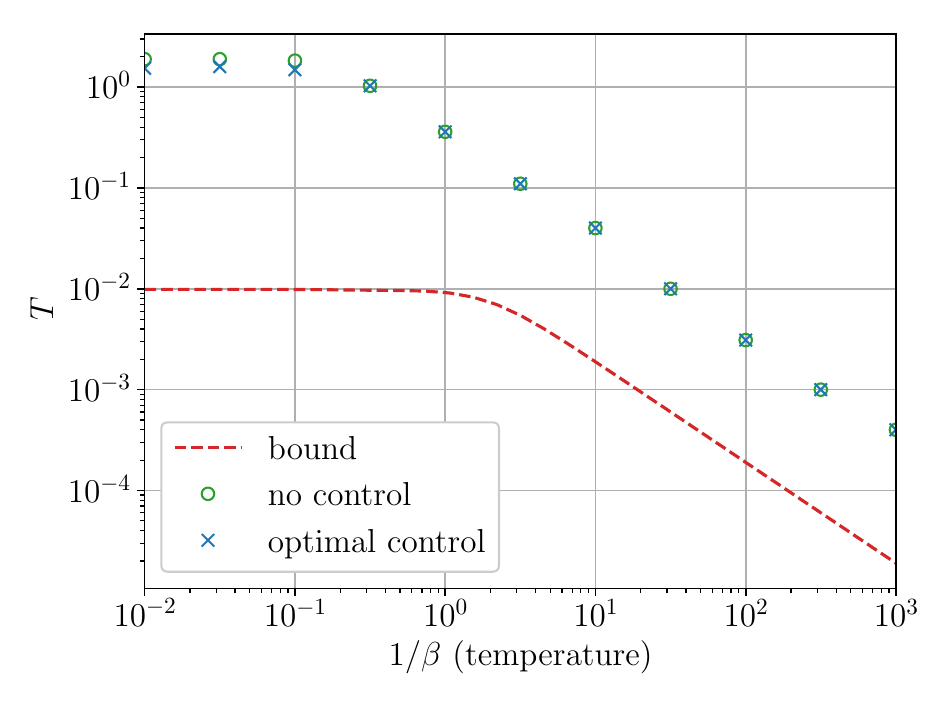}
    \caption{Time $T$ to prepare the thermal state for the four-qubit system described by Eqs.~\eqref{eq:hamising} and \eqref{eq:dissipatorising}--i.e. an extensive all-to-all antiferromagnetic Ising model coupled to a bath of harmonic oscillators in thermal equilibrium--as a function of the temperature $1/\beta$. The bound in Eq.~\eqref{eq:specineqweak} is shown as a red dashed line. The blue markers show the time to prepare the target state up to trace distance 0.1 with a piecewise-constant control. The optimal time is numerically estimated from above. For reference, the green markers show the time required for the dissipator alone to drive the system to its fixed point up to the same trace distance as in the case assisted by coherent control.}
    \label{fig:thermal_bound_4}
\end{figure}

In Fig. \ref{fig:thermal_bound_4} we numerically explore the bound in Eq.~\eqref{eq:specineqweak} for thermal state preparation as a function of the temperature $1/\beta$. We considered the Ising model (Eq.~\eqref{eq:hamising}) and dissipator given in Eq.~\eqref{eq:dissipatorising} for an $N=4$ spin system initially prepared in the state $\rho_{-}=|-\rangle\langle -|^{\otimes N}$.

We observe that the transverse-field control only noticeably accelerates state preparation in the small temperature regime. In all other cases we could not find optimal controls that lead to faster thermal state preparation times $T$.  In the high temperature limit (i.e. $\beta \to 0$), we have that $\gamma_{ab}(\omega) \sim \frac{1}{\beta}$ (i.e. the spectral density function is proportional to temperature). No other terms in the bound given by Eq.~\eqref{eq:specineq} change with temperature, so $||\Lind||_{2\rightarrow2}$
grows proportionally to the temperature. As a result, the bound goes to zero as the temperature goes to infinity. We find similar asymptotic behavior for the numerically determined state preparation time, suggesting that this aspect of the bound captures that systems will thermalize much faster at higher temperatures.

\section{\label{sec:conclusion}Conclusion}
In this work, we have derived a trajectory-independent speed limit for annealing in open quantum systems, directly generalizing results from Ref.~\cite{Garc_a_Pintos_2025}. The derived lower bounds on the annealing time (Eqs.~\eqref{eq:specineq} and \eqref{eq:specineqweak}) describe the extent to which the drift Hamiltonian and dissipator (i.e., the uncontrolled dynamics) in a Lindblad master equation are needed to reach a desired target state with the help of coherent control. As such, the developed bounds resemble speed limits obtained for closed quantum systems that draw on relations to the uncontrolled evolution \cite{Arenz_2017,Lee_2018, PhysRevA.105.042402, Wiedmann2026}.    

We have demonstrated analytically and numerically that the lower bounds we derived capture key asymptotic scaling with respect to the strength of the dissipator. In the case of thermal state preparation, this scaling implies that the lower bound properly accounts for the expected behavior in the high temperature limit.

We envision a number of future directions of inquiry. Most notably, the lower bound in Eq.~\eqref{eq:specineq} decays with the system size, counter to the common observation that the annealing time typically increases with the system size. This behavior mirrors the shortcoming discussed in Ref.~\cite{Garc_a_Pintos_2025} for the closed-system case, and it remains an open problem whether trajectory-independent bounds can properly capture system-size-dependent scaling behavior.  The bounds derived herein could also potentially be tightened by leveraging different distance metrics between the initial and final states, as in Refs.~\cite{Pires2016PRX} and \cite{Campaioli2019tightrobust}, or by further studying the CPTP dynamics in Liouvillian space as in Ref.~\cite{Srivastav_2025}. Furthermore, given the generic form of Eq.~\eqref{eq:genineq}, it would be interesting to explore applications of such speed limits beyond the annealing framework. For example, given the recent advances in quantum algorithms based on dissipative state preparation protocols \cite{Lin2025}, an asymptotic match between the time complexity of a dissipative quantum state preparation algorithm and a speed limit could provide statements of optimality.\\

\begin{acknowledgments}
J.L. would like to thank Eric Bobrow, Tianyue Li, and Huston Wilhite for helpful discussions. J.L. acknowledges support from the Department of Energy
Computational Science Graduate Fellowship under Award Number DE-SC0024386. C.A. acknowledges support from the National Science Foundation (Grant No. 2231328).

This work was supported by the Laboratory Directed Research
and Development program (Project 233972)
at Sandia National Laboratories, a multimission laboratory managed and operated by National Technology and
Engineering Solutions of Sandia LLC, a wholly owned
subsidiary of Honeywell International Inc. for the U.S.
Department of Energy’s National Nuclear Security Administration under contract DE-NA0003525. This paper
describes objective technical results and analysis. Any
subjective views or opinions that might be expressed in
the paper do not necessarily represent the views of the
U.S. Department of Energy or the United States Government. SAND2026-18384O.
\end{acknowledgments}

\bibliography{bib}

\newpage
\onecolumngrid
\appendix

\section{\label{app:norms}Norm Notation}

Throughout this work, for operators we use the trace norm (Schatten 1-norm) $||A||_1 = \text{Tr}(\sqrt{A^\dagger A})$ and the Hilbert-Schmidt norm (Schatten 2-norm) $||A||_2 = \sqrt{\text{Tr}(A^\dagger A)}$. For superoperators, we adopt the notation of Ref.~\cite{Watrous2005} for the following induced norms:
\begin{align}
    ||\mathcal{B}||_{1\rightarrow1} &= \sup_{A \neq 0} \frac{||\mathcal{B}A||_1}{||A||_1} = \sup_{||A||_1 = 1} \text{Tr}\left(\sqrt{(\mathcal{B}A)^\dagger (\mathcal{B}A)}\right), \\
    ||\mathcal{B}||_{2\rightarrow2} &= \sup_{A \neq 0} \frac{||\mathcal{B}A||_2}{||A||_2} = \sup_{||A||_2 = 1} \sqrt{\text{Tr}\left((\mathcal{B}A)^\dagger (\mathcal{B}A)\right)}.
\end{align}
It is straightforward to compute $||\mathcal{B}||_{2\rightarrow2}$ by writing the superoperator as a matrix acting on a vectorized density matrix and taking the largest singular value of that matrix. We can follow the standard vectorization conventions outlined in Sec.~4.3 of Ref.~\cite{Horn1991} for this process, i.e.,
\begin{equation}
    \text{vec}(\rho) = \begin{pmatrix}
        \rho_{i1}\\ \vdots \\ \rho_{i2} \\ \vdots
    \end{pmatrix},\quad
    \text{vec}(ABC) = \left( C^T \otimes A \right) \text{vec}(B).
\end{equation}

\section{\label{app:normequiv}Norm Equivalence}
In this Appendix, we prove Eq.~\eqref{eq:specineqweak} by showing that $||\mathcal{B}||_{1\rightarrow1} \leq \sqrt{2^N}||\mathcal{B}||_{2\rightarrow2}$. This enables us to use the computationally easier induced 2-norm for the numerical simulations. The result below follows standard methods from Chapter~3 of Ref.~\cite{Humpherys2017} for matrix norm equivalences, adapted to the superoperator case.
\begin{prop}
\label{prop:normequiv}
    If $\mathcal{B}$ is a linear superoperator acting on a $d^2$-dimensional vector space of square matrices, then
    \begin{equation}
    \label{eq:normequiv}
        ||\mathcal{B}||_{1\rightarrow1} \leq\sqrt{d}||\mathcal{B}||_{2\rightarrow2}.
    \end{equation}
\end{prop}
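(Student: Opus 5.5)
The plan is to reduce the superoperator inequality in Eq.~\eqref{eq:normequiv} to two elementary comparisons between the Schatten 1-norm and 2-norm of a single $d\times d$ matrix. We then chain them through the definition of the induced norms in App.~\ref{app:norms}. This follows the usual way matrix-norm equivalences transfer to induced norms.

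\emph{Step 1: the two matrix inequalities.} Let $A$ be a $d\times d$ matrix with singular values $s_1,\dots,s_d\geq 0$. Then $\|A\|_1=\sum_i s_i$ and $\|A\|_2=(\sum_i s_i^2)^{1/2}$.
\begin{itemize}
\item Applying Cauchy--Schwarz to the vectors $(s_1,\dots,s_d)$ and $(1,\dots,1)$ gives $\|A\|_1\leq\sqrt{d}\,\|A\|_2$. This is where the factor $\sqrt{d}$ enters: there are at most $d$ nonzero singular values.
\item Expanding $(\sum_i s_i)^2\geq \sum_i s_i^2$, which holds since all cross terms are nonnegative, gives $\|A\|_2\leq\|A\|_1$.
\end{itemize}

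\emph{Step 2: chaining.} Fix a nonzero $A$. Apply the first inequality to the $d\times d$ matrix $\mathcal{B}A$, then the definition of $\|\mathcal{B}\|_{2\rightarrow2}$, then the second inequality:
\begin{equation*}
\|\mathcal{B}A\|_1\leq\sqrt{d}\,\|\mathcal{B}A\|_2\leq\sqrt{d}\,\|\mathcal{B}\|_{2\rightarrow2}\|A\|_2\leq\sqrt{d}\,\|\mathcal{B}\|_{2\rightarrow2}\|A\|_1 .
\end{equation*}
Dividing by $\|A\|_1$ and taking the supremum over $A\neq 0$ yields Eq.~\eqref{eq:normequiv}.

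No step is a genuine obstacle. The one point needing care is to apply the two matrix inequalities in the correct directions: the $\sqrt{d}$-lossy inequality goes on the output $\mathcal{B}A$, and the lossless $\|A\|_2\leq\|A\|_1$ goes on the input. Done this way, the constant is $\sqrt{d}$ rather than $d$. We also note that this is the same as the $\sqrt{2^N}$ quoted at the start of this appendix, since $d=2^N$ for $N$ qubits.
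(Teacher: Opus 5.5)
Your proposal is correct and follows the paper's proof essentially step for step. It uses the same two singular-value inequalities, $\|A\|_2\le\|A\|_1$ and the Cauchy--Schwarz bound $\|A\|_1\le\sqrt{d}\,\|A\|_2$, and chains them with the $\sqrt{d}$-lossy one on the output $\mathcal{B}A$ and the lossless one on the input $A$; the only cosmetic difference is that you chain pointwise before taking the supremum, whereas the paper chains directly inside the suprema.
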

\begin{proof}
    Let $\{\sigma_k\}_{k=1}^d$ denote the singular values of $A$. First, note that 
    \begin{equation}
        ||A||_2^2 = \sum_{k=1}^d\sigma_k^2 \leq \left(\sum_{k=1}^d \sigma_k\right)^2 = ||A||_1^2,
    \end{equation}
    so $||A||_2 \leq ||A||_1$. Similarly, by Cauchy-Schwarz,
    \begin{equation}
        ||A||_1 = \sum_{k=1}^d \sigma_k\cdot 1 \leq \sqrt{\sum_{k=1}^d 1}\sqrt{\sum_{k=1}^d\sigma_k^2} = \sqrt{d}||A||_2.
    \end{equation}
    Therefore, we have that
    \begin{equation}
        ||\mathcal{B}||_{1\rightarrow1} = \sup_{A\neq 0} \frac{||\mathcal{B}A||_1}{||A||_1}
        \leq \sup_{A\neq 0} \frac{\sqrt{d}||\mathcal{B}A||_2}{||A||_1}
        \leq \sup_{A\neq 0} \frac{\sqrt{d}||\mathcal{B}A||_2}{||A||_2} = \sqrt{d}||\mathcal{B}||_{2\rightarrow2}.
    \end{equation}
\end{proof}

\section{\label{app:qslcomp}Speed Limit Comparison in Closed-System Case}

We can compare Eq.~\eqref{eq:specineq} to the bound from \cite{Garc_a_Pintos_2025} for the case when $\mathcal{D} = 0$ and the initial and target states are pure states, i.e. $\rho_0=|\psi_0 \rangle \langle \psi_0|$ and $\rho_T = |\psi_T \rangle \langle \psi_T|$. With this setup, Eq.~(5) from \cite{Garc_a_Pintos_2025} takes the following form:
\begin{equation}
    T \geq \frac{D_B\left( |\psi_T\rangle, |\psi_0\rangle\right)}{\sqrt{\text{Var}_{|\psi_T\rangle}\left(H_0 \right)}},
\end{equation}
where $D_B(|\psi_T\rangle,|\psi_0\rangle) = \sqrt{2\left(1-|\langle\psi_T|\psi_0\rangle|\right)}$ is the Bures distance between the initial and final state and $\text{Var}_{|\psi_T\rangle}\left(H_0 \right) = \langle \psi_T | H_0^2|\psi_T \rangle - \left( \langle \psi_T | H_0|\psi_T \rangle\right)^2$ is the variance of $H_0$ with respect to the final state. For pure states, we have that $||\rho_T - \rho_0||_1 = 2\sqrt{1-|\langle\psi_T|\psi_0\rangle|^2}$, giving us the following equivalence:
\begin{equation}
    ||\rho_T - \rho_0||_1 \leq 2D_B\left( |\psi_T\rangle, |\psi_0\rangle\right) \leq \sqrt{2}||\rho_T-\rho_0||_1.
\end{equation}
Furthermore, when $\mathcal{D} = 0$ we can compute that $||\Lind||_{1\rightarrow1} = E_{\text{max}}(H_0) - E_{\text{min}}(H_0) = ||\Lind||_{2\rightarrow2}$, i.e. the difference between the maximum and minimum eigenvalues of the drift Hamiltonian. This case of unitary dynamics loosens the inequality from Eq.~\eqref{eq:normequiv}. Popoviciu's inequality \cite{popoviciu1935sur} applied to the variance of a random variable states that 
\begin{equation}
\sqrt{\text{Var}_{|\psi_T\rangle}\left(H_0 \right)} \leq \frac{E_{\text{max}}(H_0) - E_{\text{min}}(H_0)}{2}.
\end{equation}
Combining these inequalities, we get
\begin{equation}
\label{eq:usvsluis}
    \frac{||\rho_T - \rho_0||_1}{ ||\Lind||_{1\rightarrow1}} \leq  \frac{D_B\left( |\psi_T\rangle, |\psi_0\rangle\right)}{\sqrt{\text{Var}_{|\psi_T\rangle}\left(H_0 \right)}}.
\end{equation}
Hence, the generalized bound in Eq.~\eqref{eq:specineq} is looser than \cite{Garc_a_Pintos_2025} when restricted to the closed quantum system case due to using the trace distance to generalize to mixed states.

\end{document}